\newtheorem{Proposition}{Proposition}
\newtheorem{lemma}{Lemma}
\def\blfootnote{\xdef\@thefnmark{}\@footnotetext}
\def\BibTeX{{\rm B\kern-.05em{\sc i\kern-.025em b}\kern-.08em
		T\kern-.1667em\lower.7ex\hbox{E}\kern-.125emX}}
\begin{document}
\title{\LARGE Secrecy Performance Analysis of RIS-Aided Fluid Antenna Systems}

\author{
	\IEEEauthorblockN{Farshad Rostami Ghadi\textsuperscript{†}\hspace{-1mm}, Kai-Kit Wong\textsuperscript{†}\hspace{-1mm}, Masoud Kaveh\textsuperscript{‡}, F. Javier Lopez-Martinez\textsuperscript{§}\hspace{-1mm}, Wee Kiat New\textsuperscript{†}\hspace{-1mm}, and Hao Xu\textsuperscript{†}\hspace{-1mm}}
	\IEEEauthorblockA{\textsuperscript{†}Department of Electronic and Electrical Engineering, University College London, London, United Kingdom. \\}
	\IEEEauthorblockA{\textsuperscript{‡}Department of Information and Communication Engineering, Aalto University, Espoo, Finland. \\}
	\IEEEauthorblockA{\textsuperscript{§}Department of Signal Theory, Networking and Communications, University of Granada, Granada, Spain. \\}
				Emails: \{f.rostamighadi, kai-kit.wong, a.new, hao.xu\}@ucl.ac.uk; masoud.kaveh@aalto.fi; fjlm@ugr.es.
}

\maketitle
\begin{abstract}
This paper examines the impact of emerging fluid antenna systems (FAS) on reconfigurable intelligent surface (RIS)-aided secure communications. Specifically, we consider a classic wiretap channel, where a fixed-antenna transmitter sends confidential information to an FAS-equipped legitimate user with the help of an RIS, while an FAS-equipped eavesdropper attempts to decode the message. To evaluate the proposed wireless scenario, we first introduce the cumulative distribution function (CDF) and probability density function (PDF) of the signal-to-noise ratio (SNR) at each node, using the central limit theorem and the Gaussian copula function. We then derive a compact analytical expression for the secrecy outage probability (SOP). Our numerical results reveal how the incorporation of FAS and RIS can significantly enhance the performance of secure communications.
\end{abstract}

\begin{IEEEkeywords}
fluid antenna system, reconfigurable intelligent surface, wiretap channel, secure communication, secrecy outage probability
\end{IEEEkeywords}

%
\vspace{0cm}\section{Introduction}\label{sec-intro} 
The fluid antenna system (FAS) has recently emerged as a groundbreaking technology for the next generation of wireless communication, a.k.a., sixth-generation (6G), leveraging its position-flexibility as a novel degree of freedom (DoF) to optimize diversity and multiplexing gains \cite{new2024tutorial}. Compared to a traditional antenna system, where the antenna location is fixed, FAS encompasses liquid-based antennas \cite{huang2021liquid}, reconfigurable radio-frequency (RF) pixels \cite{rodrigo2014frequency}, or movable mechanical antenna structures \cite{basbug2017design} designed to modify their shape and position, thereby reconfiguring their radiation characteristics \cite{ghadi2023copula}. Notably, a fluid antenna can switch positions (i.e., ports) within a designated small area, a distinct capability that is highly beneficial for mobile devices where physical limitations restrict antenna deployment options. The concept of FAS was initially introduced to wireless communication systems in 2020 \cite{wong2020fluid}, and its application to the forthcoming 6G technology has since been explored in various studies \cite{shojaeifard2022mimo,new2023information,ghadi2024cache,new2023fluid,xu2023channel,ghadi2024isac,xu2023capacity,zhang2024pixel,shen2024design}.

Independently, the reconfigurable intelligent surfaces (RIS) have emerged as a transformative technology with the potential to revolutionize coverage and performance in advanced wireless communication systems over recent years \cite{basar2019wireless}. RIS are engineered surfaces embedded with numerous low-cost, electronically controlled reflective elements that can dynamically manipulate the wireless signal environment. By adjusting these elements, RIS can intelligently redirect radio waves from base stations (BSs) towards targeted mobile users, thus optimizing signal strength and extending coverage \cite{liu2021reconfigurable}. The promise of RIS is bolstered by significant strides in low-power electronics, which have made their continuous and efficient operation more practical in 6G wireless technology \cite{pan2021reconfigurable}. 

 On the other hand, secure and reliable communications are paramount in 6G wireless communication due to the increasing reliance on interconnected devices and critical infrastructure. In this regard, physical layer security (PLS) emerges as a vital method, leveraging inherent properties of the communication channel to provide robust, low-complexity security solutions that protect data from eavesdropping, ensuring the confidentiality of transmitted information. However, the current beamforming methods in PLS rely on fixed-position antenna arrays \cite{xiao2023array}, which can raise significant challenges due to their inherent limitations in flexibility, adaptability, and security. Therefore, it is apparent that 6G wireless technology needs to incorporate state-of-the-art technologies to address these challenges in secure communication.

Despite RIS, FAS, and PLS complementing one another, their synergistic effects are not yet fully grasped. However, only recently, the integration of FAS with an optimized RIS was examined in \cite{ghadi2024performance}, where the distribution of the equivalent channel for a FAS-equipped mobile user was derived, along with the outage probability and delay outage rate. Additionally, FAS-aided communication in PLS was studied in \cite{ghadi2024physical}, where key secrecy metrics such as secrecy outage probability (SOP), average secrecy capacity, and secrecy energy efficiency were analyzed. Furthermore, the performance of the average secrecy rate in secure communication, when only the legitimate user is equipped with FAS, was investigated in \cite{tang2023fluid}. These notable  contributions have highlighted the superiority of utilizing FAS compared to traditional fixed-antenna systems. Therefore, unlike the previous works, we consider a RIS-aided FAS in secure communication, where a fixed-antenna transmitter (Alice) conveys confidential message to a legitimate receiver (Bob) with the assist of an optimized RIS, while an eavesdropper (Eve) attempts to intercept  the information. Moreover, it is assumed that Bob and Eve are equipped with a one-sided planar FAS, which can switch to  the optimal position within a designated  two-dimensional (2D) space for reception. 

The key technical contributions of this work are as follows: (i) We introduce the cumulative distribution function (CDF) and probability density function (PDF) of the received signal-to-noise ratio (SNR), utilizing the central limit theorem (CLT) and the Gaussian copula function; (ii) We then derive a compact analytical expression for the SOP using the numerical Gaussian-Laguerre quadrature (GLQ) integration method; (iii) Finally, our numerical results demonstrate that deploying FAS in RIS-aided communication significantly enhances system performance, leading to more secure and reliable transmission.  
\section{System Model}\label{sec-sys}
We consider a secure wireless communication system as illustrated in Fig. \ref{fig-model}, where a transmitter (Alice) wants to send confidential information $x$ with total transmit power $P$ to a legitimate mobile user (Bob) through a RIS, while an eavesdropper (Eve) endeavors to decode the message from its received signal. To facilitate notation, we define the subscript $i$ to indicate the node associated with Bob and Eve, i.e., $i\in\left\{\mathrm{b,e}\right\}$, and we use subscript "$\mathrm{r}$" for the RIS. Without loss of generality, we assume that Alice has a single fixed-position antenna and the RIS includes $M$ reflecting elements, while a planar FAS with $N_i$ predefined positions (i.e., ports) distributed over an area of $W_i$ is deployed at both Bob and Eve. In particular, we assume that the planar FAS has a grid structure, with $N_{l,i}$ ports evenly spaced along a linear distance of $W_{l,i}\lambda$ for $l\in\{1,2\}$ such that $N_i=N_{1,i}\times N_{2,i}$ and $W_i=\lambda^2\left(W_{1,i},W_{2,i}\right)$, where $\lambda$ denotes the wavelength of the carrier frequency. Furthermore, to convert the 2D indices to a 1D index, we use a mapping function defined as $\mathcal{F}\left(n_i\right)=\left(n_{1,i}, n_{2,i}\right)$, where $n_i\in\left\{1,\dots,N_i\right\}$ and $n_{l,i}\in\left\{1,\dots,N_{l,i}\right\}$. Therefore, the received signal at node $i$ is expressed as
\begin{align}
	y_{i}=\bm{g}_\mathrm{r}^T\mathbf{\Psi}\bm{h}_{i}x+z_i=h_{\mathrm{eq},i}x+z_i,
\end{align}
in which $h_{\mathrm{eq},i}$ denotes the equivalent channel from Alice to $n_i$-th port at node $i$ through RIS. The vectors $\bm{g}_\mathrm{r}=\left[g_1\mathrm{e}^{-j\theta_1},\dots,g_M\mathrm{e}^{-j\theta_M}\right]^T\in\mathbb{C}^{M\times 1}$ and $\bm{h}_{i}=\left[h_{1,n_i}^i\mathrm{e}^{-j\beta_1^i},\dots,h_{M,n_i}^i\mathrm{e}^{-j\beta^i_M}\right]^T\in\mathbb{C}^{M\times 1}$ consist of the channel coefficients from Alice to RIS and from RIS to $n_i$-th port at node $i$, respectively. Besides, $\theta_m$ and $\beta_m^i$ indicate the phases of the respective channel coefficients, whereas $g_m$ and $h_{m,n_i}^i$ denote the amplitudes of $\bm{g_r}$ and $\bm{h}_i$, respectively. Additionally, the diagonal matrix $\mathbf{\Psi}=\text{diag}\left(\left[\zeta_1\mathrm{e}^{j\psi_1},\dots,\zeta_M\mathrm{e}^{j\psi_M}\right]\right)\in\mathbb{C}^{M\times M}$ includes the adjustable phases set by the reflecting elements of the RIS.
 Furthermore, $z_i$ represents the independent identically distributed (i.i.d.) additive white Gaussian noise (AWGN) with zero mean and variance $\sigma_i^2$ at each FAS port of node $i$. 

In FAS, ports can move freely and be positioned arbitrarily close to each other, leading to spatial correlation. Therefore, assuming a planar FAS with a half-space reception coverage in front, the spatial correlation between any two arbitrary ports, $n_i=\mathcal{F}^{-1}\left(n_{1,i},n_{2,i}\right)$ and $\tilde{n}_i=\mathcal{F}^{-1}\left(\tilde{n}_{1,i},\tilde{n}_{2,i}\right)$, can be determined as \cite{ghadi2024performance}
\begin{align}\notag
&\varrho_{n_i,\tilde{n}_i}=\\
&{\rm sinc}\left(\frac{2}{\lambda}\sqrt{\left(\frac{|n_{1,i}-\tilde{n}_{1,i}|}{N_{1,i}-1}W_{1,i}\right)^2+\left(\frac{|n_{2,i}-\tilde{n}_{2,i}|}{N_{2,i}-1}W_{2,i}\right)^2}\right), \label{eq-cov}
\end{align}
 where $\tilde{n}_i\in\left\{1,\dots,N_i\right\}$,  $\tilde{n}_{l,i}\in\left\{1,\dots,N_{l,i}\right\}$, and $\mathrm{sinc(t)=\frac{\sin(\pi t)}{\pi t}}$ denotes the sinc function.  Therefore, the spatial correlation matrix $\mathbf{R}_i$ for node $i$ is defined as
\begin{align}
\mathbf{R}_i=\begin{bmatrix}
 	\varrho_{1,1} & \varrho_{1,2} &\dots& \varrho_{1,N_i}\\
 	\varrho_{2,1} & \varrho_{2,2} &\dots& \varrho_{2,N_i}\\ \vdots & \vdots & \ddots & \vdots\\
 	\varrho_{N_i,1} & \varrho_{N_i,2} &\dots& \varrho_{N_i,N_i}
\end{bmatrix}.
\end{align}
Further, by leveraging the fact that FAS can activate an optimal port that maximizes the SNR for communication, the received SNR at node $i$ can be defined as
\begin{align}\label{eq-snr1}
	\gamma_i=\frac{P\left|\left[h_{\mathrm{eq},i}\right]_{n^*_i}\right|^2}{\sigma^2_i\left(d_\mathrm{r}d_i\right)^\alpha}=\overline{\gamma}_i\textcolor{black}{\left|\left[h_{\mathrm{eq},i}\right]_{n^*_i}\right|^2},
\end{align} 
where $\overline{\gamma}_i$ indicates the average SNR, with $d_\mathrm{r}$ and $d_i$ representing distances between Alice and RIS, and between the RIS to node $i$, respectively, and $\alpha>2$ is the path-loss exponent. Besides, $n^*_i$  specifies the index of the port that has been optimally selected at node $i$, i.e., 
\begin{align}
n^*_i=\arg\underset{n_i}{\max}\left\{\left|\left[h_{\mathrm{eq},i}\right]_{n_i}\right|^2\right\},
\end{align}
where the notation $\left[h_{\mathrm{eq},i}\right]_{n_i}$ denotes the $n_i$-th entry of $h_{\mathrm{eq},i}$. Thus, the channel gain at node $i$, incorporating the effects of FAS, is given by
\begin{align}
h_{\mathrm{fas},i}^2=\max\left\{|h^i_\mathrm{eq,1}|^2,|h^i_\mathrm{eq,2}|^2,\dots,|h^i_{\mathrm{eq},{N_i}}|^2\right\}. \label{eq-hfas}
\end{align}
Therefore, \eqref{eq-snr1} can be rewritten as
\begin{align}
	\gamma_i = \overline{\gamma}_i h_{\mathrm{fas},i}^2.\label{eq-snr}
\end{align}
\begin{figure}[!t]
	\centering
	\includegraphics[width=0.9\columnwidth]{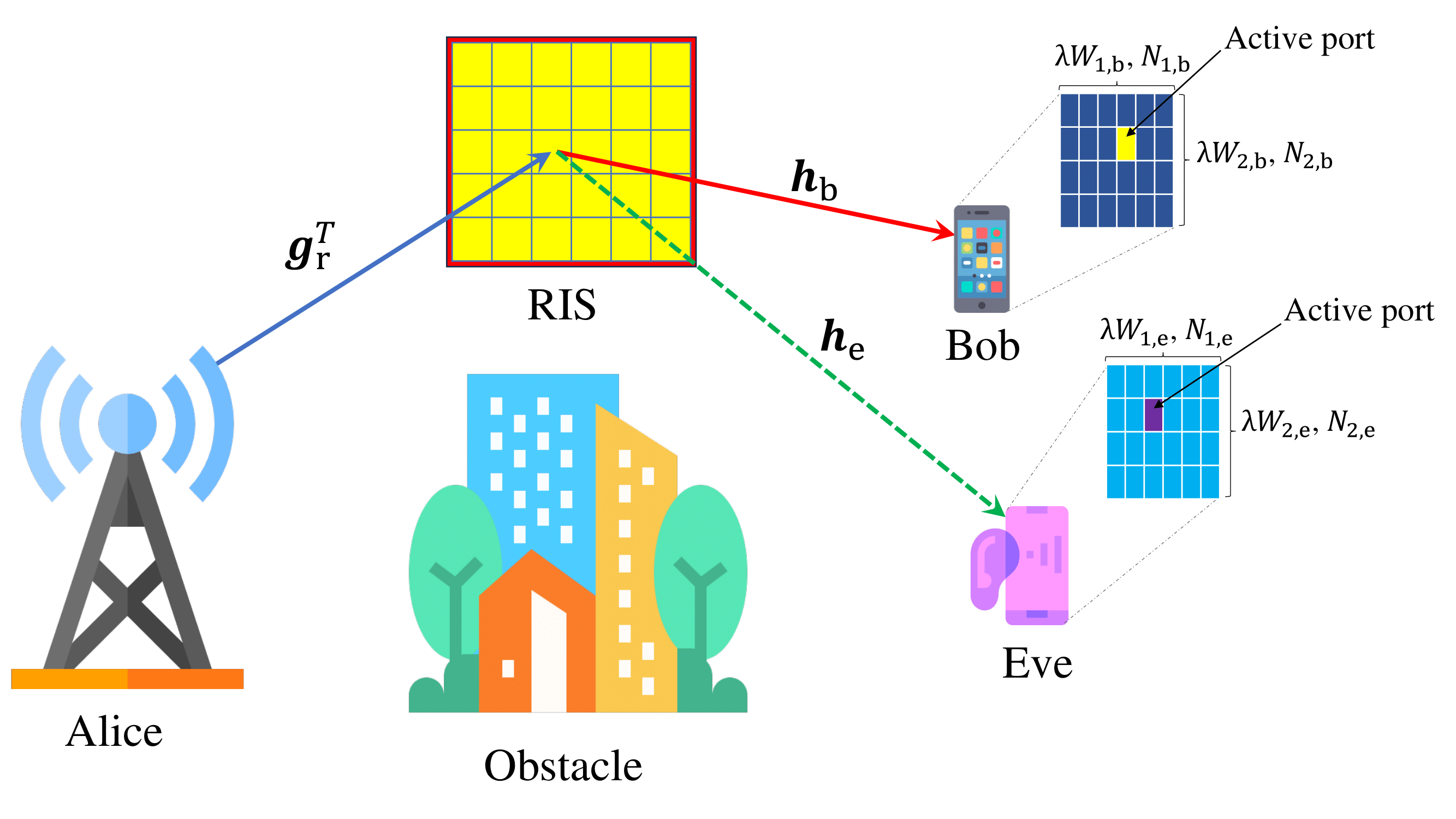}
	\caption{A RIS-aided wiretap channel involving the single fixed-antenna Alice communicating with the FAS-equipped Bob and the FAS-equipped eavesdropper Eve.}\vspace{0cm}\label{fig-model}
\end{figure}
\section{Performance Analysis}
In this section, we first derive the CDF and PDF of the SNR at node $i$, utilizing the CLT and copula theory. Next, we present a compact series-form expression for the SOP by exploiting the numerical GLQ integration technique.
\subsection{Statistical Characterization}
\subsubsection{Distribution of $\gamma_\mathrm{b}$} Given  \eqref{eq-hfas} and \eqref{eq-snr}, we first need to extend the equivalent channel gain at $n_\mathrm{b}$-th port Bob as follows
\begin{align}
	\left|h_{\mathrm{eq},n_\mathrm{b}}\right|^2&=\left|\sum_{m=1}^Mg_mh_{m,n_\mathrm{b}}^\mathrm{b}\mathrm{e}^{-j(\psi_m-\theta_m-\beta_m^\mathrm{b})}\right|^2\\
	&\hspace{-.5mm}\overset{(a)}{=}\left(\sum_{m=1}^Mg_mh_{m,n_\mathrm{b}}^\mathrm{b}\right)^2=A^2,
\end{align}
where $(a)$ stems from assuming perfect channel state information (CSI) for the RIS configuration, which allows for an ideal phase adjustment, i.e., $\psi_m = \theta_m + \beta_m^\mathrm{b}$. Assuming a large number of reflecting elements, we can use the CLT to approximate the distribution of $A^2$. Consequently, $A = \sum_{m=1}^M g_m h_{m,n_\mathrm{b}}^\mathrm{b}$ is well approximated as a Gaussian random variable (RV) with mean $\mu_A = \frac{M\pi}{4}$ and variance $\sigma^2_A = M \left(1 - \frac{\pi^2}{16}\right)$. Therefore, $A^2$ follows a non-central chi-square distribution with one degree of freedom such that the corresponding CDF and PDF are respectively given by 
\begin{align}
	F_{A^2}(a) = 1-Q_{\frac{1}{2}}\left(\sqrt{\frac{\tau}{\sigma_A^2}},\sqrt{\frac{a}{\sigma_A^2}}\right),\label{cdf-a}
\end{align}
and
\begin{align}
	f_{A^2}(a) = \frac{1}{2\sigma^2_\mathrm{A}}\left(\frac{a}{\tau}\right)^{-\frac{1}{4}}\mathrm{exp}\left(-\frac{a+\tau}{2\sigma^2}\right)\mathcal{I}_{-\frac{1}{2}}\left(\frac{\sqrt{a\tau}}{\sigma_A^2}\right),\label{eq-pdf-a}
\end{align}
where $\tau=\mu_\mathrm{A}^2$ denotes the non-centrality parameter, $Q_l(v,w)$ represents the Marcum $Q$-function with order $l$, and $\mathcal{I}_\nu\left(\cdot\right)$ indicates the modified Bessel function of the first kind of order $\nu$. Given the distribution of the equivalent channel gain, the distribution of $\gamma_\mathrm{b}$ can be characterized as follows in the ensuing proposition.
\begin{Proposition}
The CDF and PDF of $\gamma_\mathrm{b}$ are given by \eqref{eq-cdf-gb} and \eqref{eq-pdf-gb}, respectively, where $\boldsymbol{\mathbf{\varphi}}^{-1}_{A^2}$ is defined in \eqref{eq-phi} (see the top of the next page).
\begin{figure*}[t]
	\normalsize
	\begin{align}
		&F_{\gamma_\mathrm{b}}\left(\gamma_\mathrm{b}\right)=\Phi_{\mathbf{R}_\mathrm{b}}\left(\sqrt{2}\mathrm{erf}^{-1}\left(1-2Q_{\frac{1}{2}}\left(\sqrt{\frac{\tau}{\sigma_A^2}},\sqrt{\frac{\gamma_\mathrm{b}}{\sigma_A^2\overline{\gamma}_\mathrm{b}}}\right)\right),\dots,\sqrt{2}\mathrm{erf}^{-1}\left(1-2Q_{\frac{1}{2}}\left(\sqrt{\frac{\tau}{\sigma_A^2}},\sqrt{\frac{\gamma_\mathrm{b}}{\sigma_A^2\overline{\gamma}_\mathrm{b}}}\right)\right);\vartheta_\mathrm{b}\right)\label{eq-cdf-gb}
	\end{align}
	\hrulefill
	\begin{align}\label{eq-pdf-gb}
		f_{\gamma_\mathrm{b}}\left(\gamma_\mathrm{b}\right)=\frac{1}{\overline{\gamma}_\mathrm{b}}\left[\frac{1}{2\sigma^2_\mathrm{A}}\left(\frac{\gamma_\mathrm{b}}{\tau\overline{\gamma}_\mathrm{b}}\right)^{-\frac{1}{4}}\mathrm{exp}\left(-\frac{\gamma_\mathrm{b}+\tau\overline{\gamma}_\mathrm{b}}{2\sigma^2\overline{\gamma}_\mathrm{b}}\right)\mathcal{I}_{-\frac{1}{2}}\left(\frac{\sqrt{\gamma_\mathrm{b}\tau}}{\sigma_A^2\sqrt{\overline{\gamma}_\mathrm{b}}}\right)\right]^{N_\mathrm{b}}
		\frac{\exp\left(-\frac{1}{2}\left(\boldsymbol{\mathbf{\varphi}}^{-1}_{A^2}\right)^T\left(\mathbf{R}_\mathrm{b}^{-1}-\mathbf{I}\right)\boldsymbol{\mathbf{\varphi}}^{-1}_{A^2}\right)}{\sqrt{{\rm det}\left(\mathbf{R}_\mathrm{b}\right)}}
	\end{align}
	\hrulefill
	\begin{align}\label{eq-phi}
		\boldsymbol{\mathbf{\varphi}}^{-1}_{A^2}=\left[\sqrt{2}\mathrm{erf}^{-1}\left(1-2Q_{\frac{1}{2}}\left(\sqrt{\frac{\tau}{\sigma_A^2}},\sqrt{\frac{\gamma_\mathrm{b}}{\sigma_A^2\overline{\gamma}_\mathrm{b}}}\right)\right),\dots,\sqrt{2}\mathrm{erf}^{-1}\left(1-2Q_{\frac{1}{2}}\left(\sqrt{\frac{\tau}{\sigma_A^2}},\sqrt{\frac{\gamma_\mathrm{b}}{\sigma_A^2\overline{\gamma}_\mathrm{b}}}\right)\right)\right]^T
	\end{align}
	\hrulefill
\end{figure*}
	\end{Proposition}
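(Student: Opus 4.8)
The plan is to characterize $\gamma_\mathrm{b}=\overline{\gamma}_\mathrm{b}h_{\mathrm{fas},\mathrm{b}}^2$ by first obtaining the distribution of the port-selected gain $h_{\mathrm{fas},\mathrm{b}}^2=\max_{n_\mathrm{b}}|h_{\mathrm{eq},n_\mathrm{b}}|^2$, where each of the $N_\mathrm{b}$ port gains is marginally the non-central chi-square variate $A^2$ with CDF \eqref{cdf-a} and PDF \eqref{eq-pdf-a}, and then rescaling by $\overline{\gamma}_\mathrm{b}$. Because the ports are spatially correlated through $\mathbf{R}_\mathrm{b}$, the only nontrivial ingredient is the joint law of the $N_\mathrm{b}$ gains; I would supply it with the Gaussian copula, which by Sklar's theorem couples the identical marginals $F_{A^2}$ to the correlation structure $\mathbf{R}_\mathrm{b}$.

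For the CDF, the key observation is that selecting the strongest port makes $h_{\mathrm{fas},\mathrm{b}}^2$ a maximum, so its CDF is the joint CDF evaluated on the diagonal, $F_{h_{\mathrm{fas},\mathrm{b}}^2}(a)=\Pr\!\big(|h_{\mathrm{eq},1}|^2\le a,\dots,|h_{\mathrm{eq},N_\mathrm{b}}|^2\le a\big)$. Expressing this joint CDF through the Gaussian copula gives $\Phi_{\mathbf{R}_\mathrm{b}}\!\big(\Phi^{-1}(F_{A^2}(a)),\dots,\Phi^{-1}(F_{A^2}(a))\big)$, where $\Phi_{\mathbf{R}_\mathrm{b}}$ is the zero-mean multivariate normal CDF with correlation $\mathbf{R}_\mathrm{b}$. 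I would then use the identity $\Phi^{-1}(u)=\sqrt{2}\,\mathrm{erf}^{-1}(2u-1)$ together with $F_{A^2}(a)=1-Q_{\frac12}(\sqrt{\tau/\sigma_A^2},\sqrt{a/\sigma_A^2})$ to collapse the composition into $\sqrt{2}\,\mathrm{erf}^{-1}\!\big(1-2Q_{\frac12}(\cdot)\big)$. Finally, substituting $a=\gamma_\mathrm{b}/\overline{\gamma}_\mathrm{b}$, so that the Marcum argument becomes $\sqrt{\gamma_\mathrm{b}/(\sigma_A^2\overline{\gamma}_\mathrm{b})}$, yields \eqref{eq-cdf-gb}, with $\boldsymbol{\mathbf{\varphi}}^{-1}_{A^2}$ of \eqref{eq-phi} being the common vector of these transformed arguments.

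For the PDF, I would pass from the joint CDF to the joint density using the Gaussian copula density $c_{\mathbf{R}_\mathrm{b}}(\mathbf{u})=\det(\mathbf{R}_\mathrm{b})^{-1/2}\exp\!\big(-\tfrac12\boldsymbol{\zeta}^T(\mathbf{R}_\mathrm{b}^{-1}-\mathbf{I})\boldsymbol{\zeta}\big)$ with $\boldsymbol{\zeta}=\Phi^{-1}(\mathbf{u})$, so that the joint density factors as $c_{\mathbf{R}_\mathrm{b}}\big(F_{A^2}(a),\dots\big)\prod_{n=1}^{N_\mathrm{b}}f_{A^2}(a)$. Reading the copula argument on the diagonal reproduces $\boldsymbol{\mathbf{\varphi}}^{-1}_{A^2}$, the marginal product gives the $\big[f_{A^2}(\gamma_\mathrm{b}/\overline{\gamma}_\mathrm{b})\big]^{N_\mathrm{b}}$ bracket, and the change of variables $\gamma_\mathrm{b}=\overline{\gamma}_\mathrm{b}a$ contributes the Jacobian $1/\overline{\gamma}_\mathrm{b}$; assembling these pieces gives \eqref{eq-pdf-gb}.

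The main obstacle, and the step deserving the most care, is the chain of inverse transformations: correctly composing $\Phi^{-1}$ with $F_{A^2}$, re-expressing it through $\mathrm{erf}^{-1}$ and the Marcum $Q$-function, and tracking the $\overline{\gamma}_\mathrm{b}$-scaling consistently inside the Bessel and exponential factors. A secondary conceptual point I would flag is the PDF step: strictly differentiating the diagonal CDF produces a sum of conditional terms, whereas the compact form in \eqref{eq-pdf-gb} is the joint copula density read on the diagonal, so I would make explicit that this is the copula-density evaluation adopted throughout this framework rather than the literal derivative of the maximum's CDF.
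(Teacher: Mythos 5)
Your proposal follows essentially the same route as the paper: Sklar's theorem with the Gaussian copula to write the CDF of the port-maximum as $\Phi_{\mathbf{R}_\mathrm{b}}$ evaluated on the diagonal of $\varphi^{-1}(F_{A^2}(\gamma_\mathrm{b}/\overline{\gamma}_\mathrm{b}))$, then the Gaussian copula density times $[f_{A^2}]^{N_\mathrm{b}}$ with the $1/\overline{\gamma}_\mathrm{b}$ Jacobian for the PDF. Your closing caveat is well taken and in fact sharper than the paper's own wording: the paper invokes ``the chain rule'' on the diagonal CDF but actually writes down the joint copula density evaluated on the diagonal (which is not the derivative of $C(F(x),\dots,F(x))$ --- that would be a sum of partial-derivative terms), so your explicit acknowledgment that \eqref{eq-pdf-gb} is the diagonal joint-density evaluation adopted in this framework, rather than the literal derivative of the maximum's CDF, is the honest way to present that step.
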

	\begin{proof}
Given \eqref{eq-snr}, the CDF of $\gamma_\mathrm{b}$ can be mathematically expressed as follows
\begin{align}
	F_{\gamma_\mathrm{b}}\left(\gamma_\mathrm{b}\right)&=\Pr\left(\overline{\gamma}_\mathrm{b} h_{\mathrm{fas,b}}^2\leq\gamma_\mathrm{b}\right)\\
	&=\Pr\left(\max\left\{|h^i_\mathrm{eq,1}|^2,\dots,|h^i_{\mathrm{eq},{N_\mathrm{b}}}|^2\right\}\leq \frac{\gamma_\mathrm{b}}{\overline{\gamma}_\mathrm{b}}\right)\\
	&=\Pr\left(\max\left\{A^2,\dots, A^2\right\}\leq \frac{\gamma_\mathrm{b}}{\overline{\gamma}_\mathrm{b}}\right)\\
	&=F_{A^2,\dots,A^2}\left(\frac{\gamma_\mathrm{b}}{\overline{\gamma}_\mathrm{b}},\dots,\frac{\gamma_\mathrm{b}}{\overline{\gamma}_\mathrm{b}}\right)\\
	&\overset{(b)}{=}C\left(F_{A^2}\left(\frac{\gamma_\mathrm{b}}{\overline{\gamma}_\mathrm{b}}\right),\dots,F_{A^2}\left(\frac{\gamma_\mathrm{b}}{\overline{\gamma}_\mathrm{b}}\right);\vartheta\right),\label{eq-cdf1}
\end{align}
where $(b)$ is derived using Sklar's theorem, which allows for the accurate generation of the joint multivariate CDF of arbitrarily correlated RVs by utilizing their respective marginal distributions and a suitable parametric copula $C$ with a dependence parameter $\vartheta$. Specifically, $C\left(\cdot\right):\left[0,1\right]^d\rightarrow \left[0,1\right]$ is a function that defines a joint CDF for $d$ random vectors on the unit cube $\left[0,1\right]^d$ with uniform marginal distributions \cite{ghadi2020copula}, i.e., 
\begin{align}
C\left(u_1,\dots,u_d;\vartheta\right)=\Pr\left(U_1\leq u_1,\dots,U_d\leq u_d\right), \label{eq-c}
\end{align}
where $u_j=F_{S_j}\left(s_j\right)$, with $S_j$ denoting any arbitrary RV for $j\in\left\{1,\dots,d\right\}$, and $\vartheta$ 
 measures the linear or non-linear correlation between arbitrarily correlated RVs. Hence, by applying $u_{n_\mathrm{b}}=F_{A^2}\left(\frac{\gamma_\mathrm{b}}{\overline{\gamma}_\mathrm{b}}\right)$ into \eqref{eq-c}, \eqref{eq-cdf1} is derived. It is noteworthy that \eqref{eq-c} is mathematically valid for any choice of $C$. However, it is understood from \cite{ghadi2023gaussian} that the Gaussian copula can accurately describe the spatial correlation between fluid antenna ports by approximating Jake's model. Therefore, the CDF of $\gamma_\mathrm{b}$ can be derived as \cite{ghadi2023gaussian}
\begin{align}\notag
	&F_{\gamma_\mathrm{b}}\left(\gamma_\mathrm{b}\right)
	=\\
	&\Phi_{\mathbf{R}_\mathrm{b}}\left(\varphi^{-1}\left(F_{A^2}\left(\frac{\gamma_\mathrm{b}}{\overline{\gamma}_\mathrm{b}}\right)\right),\dots,\varphi^{-1}\left(F_{A^2}\left(\frac{\gamma_\mathrm{b}}{\overline{\gamma}_\mathrm{b}}\right)\right);\vartheta_\mathrm{b}\right),\label{eq-cdf-g}
\end{align}
in which $\Phi_{\mathbf{R}_\mathrm{b}}(\cdot)$ indicates the joint CDF of the multivariate normal distribution with zero mean vector and correlation matrix $\mathbf{R}_\mathrm{b}$. Additionally,  $\varphi^{-1}\left(u_j\right)=\sqrt{2}\mathrm{erf}^{-1}\left(2u_j-1\right)$ denotes the quantile function of the standard normal distribution, with $\mathrm{erf}^{-1}\left(\cdot\right)$ being the inverse of the error function $\mathrm{erf}\left(z\right)=\frac{2}{\sqrt{\pi}}\int_0^z\mathrm{e}^{-t^2}dt$. Moreover, $\vartheta_\mathrm{b}$ denotes the dependence parameter of the Gaussian copula at Bob, approximating the correlation coefficient of Jake's model, i.e., $\vartheta_{n_i,\tilde{n}_i}^\mathrm{b}\approx\varrho_{n_i,\tilde{n}_i}^\mathrm{b}$ \cite{ghadi2023gaussian}. Now, by inserting $F_{A^2}(a)$ from \eqref{cdf-a} into \eqref{eq-cdf-g}, the proof of $F_{\gamma_\mathrm{b}}\left(\gamma_\mathrm{b}\right)$ is completed. 

By applying the chain rule to \eqref{eq-cdf1}, the PDF of $\gamma_\mathrm{b}$ can be given by \vspace{0cm}
\begin{align}\notag
	&f_{\gamma_\mathrm{b}}\left(\gamma_\mathrm{b}\right)=\\
	&\frac{\left[f_{A^2}\left(\frac{\gamma_\mathrm{b}}{\overline{\gamma}_\mathrm{b}}\right)\right]^{N_\mathrm{b}}}{\overline{\gamma}_\mathrm{b}} c\left(F_{A^2}\left(\frac{\gamma_\mathrm{b}}{\overline{\gamma}_\mathrm{b}}\right),\dots,F_{A^2}\left(\frac{\gamma_\mathrm{b}}{\overline{\gamma}_\mathrm{b}}\right);\vartheta\right),\label{eq-gen-pdf}
\end{align}
where $c\left(\cdot\right)$ defines the copula density function. Next, by considering the Gaussian copula, \eqref{eq-gen-pdf} is rewritten as\vspace{0cm}
	\begin{multline}\label{eq-gb-pdf}
	f_{\gamma_\mathrm{b}}\left(\gamma_\mathrm{b}\right)=\frac{\left[f_{A^2}\left(\frac{\gamma_\mathrm{b}}{\overline{\gamma}_\mathrm{b}}\right)\right]^{N_\mathrm{b}}}{\overline{\gamma}_\mathrm{b}}\\
	\times\frac{\exp\left(-\frac{1}{2}\left(\boldsymbol{\mathbf{\varphi}}^{-1}_{A^2}\right)^T\left(\mathbf{R}_\mathrm{b}^{-1}-\mathbf{I}\right)\boldsymbol{\mathbf{\varphi}}^{-1}_{A^2}\right)}{\sqrt{{\rm det}\left(\mathbf{R}_\mathrm{b}\right)}},
\end{multline}
where $\mathrm{det}\left(\mathbf{R}_\mathrm{b}\right)$ defines the determinant of the correlation matrix $\mathbf{R}_\mathrm{b}$, $\mathbf{I}$ is the identity matrix, and $\boldsymbol{\mathbf{\varphi}}^{-1}_{A^2}=\left[\varphi^{-1}\left(u_j\right),\dots,\varphi^{-1}\left(u_j\right)\right]^T$. Next, by applying $f_{A^2}\left(a\right)$ from \eqref{eq-pdf-a} into \eqref{eq-gb-pdf}, \eqref{eq-pdf-gb} is derived and the proof is completed. 
	\end{proof}
	\begin{figure}
		\centering
		\vspace{0cm}
		\subfigure[]{%
			\includegraphics[width=0.25\textwidth]{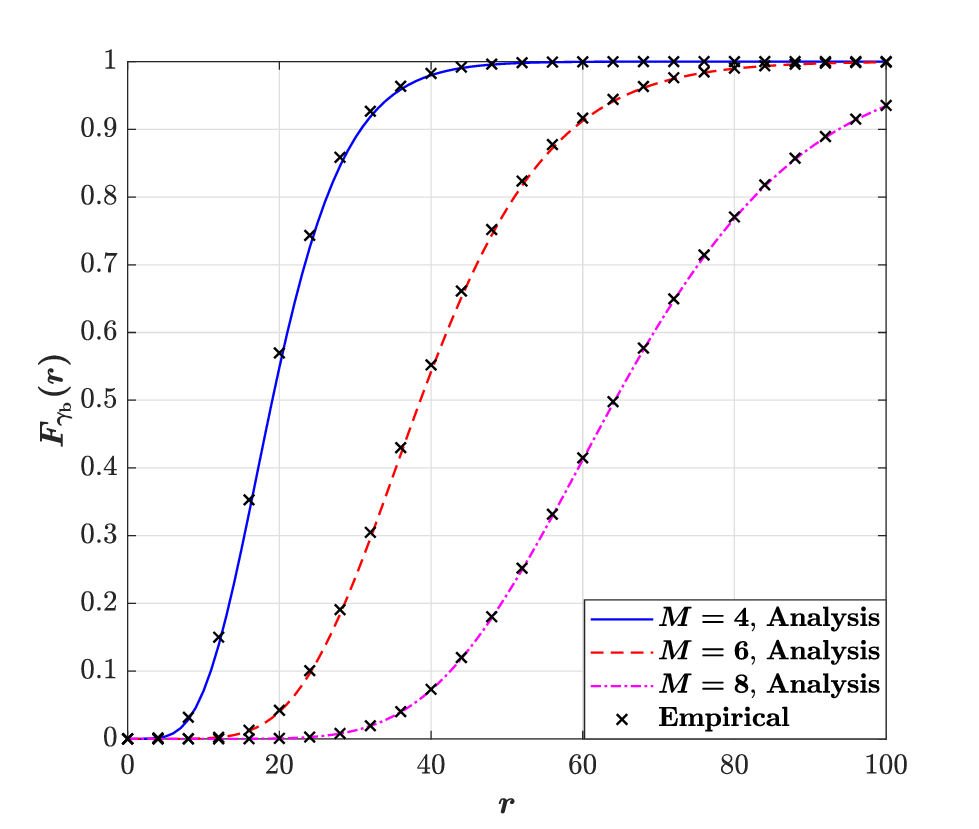}\label{fig-cdf}%
		}\vspace{0cm}
		\subfigure[]{%
			\includegraphics[width=0.25\textwidth]
			{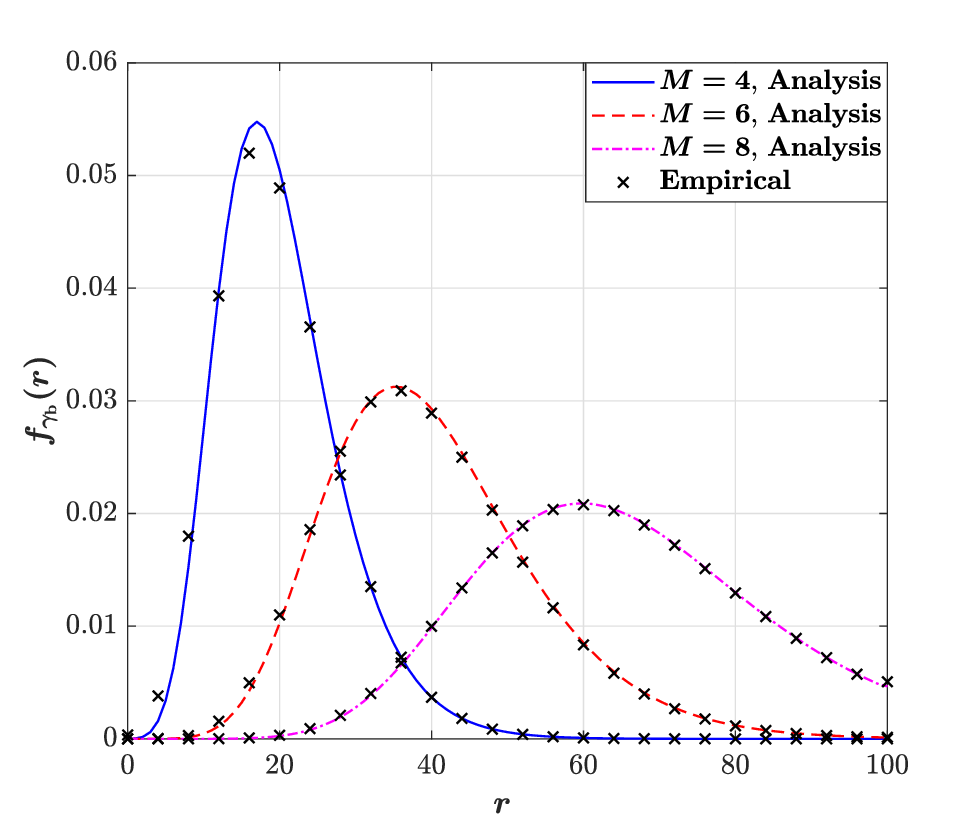}\label{figpdf}%
		}\\
		\vspace{0cm}
		\subfigure[]{%
			\includegraphics[width=0.25\textwidth]{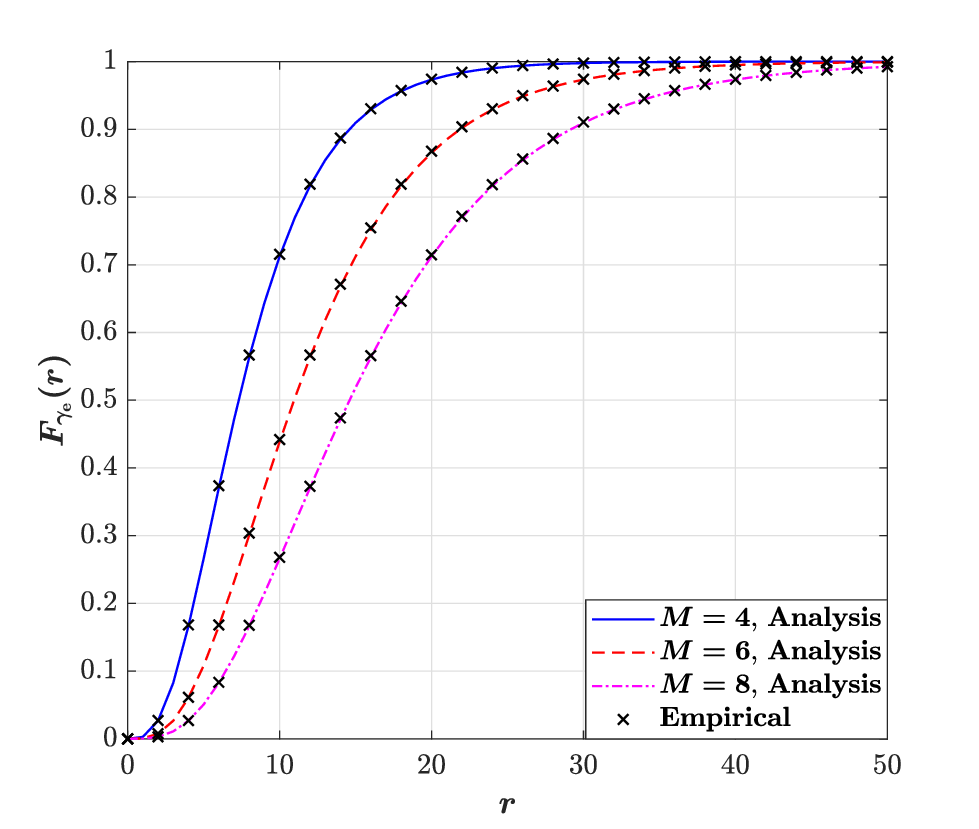}\label{fig-cdf_e}%
		}\vspace{0cm}
				\subfigure[]{%
			\includegraphics[width=0.25\textwidth]{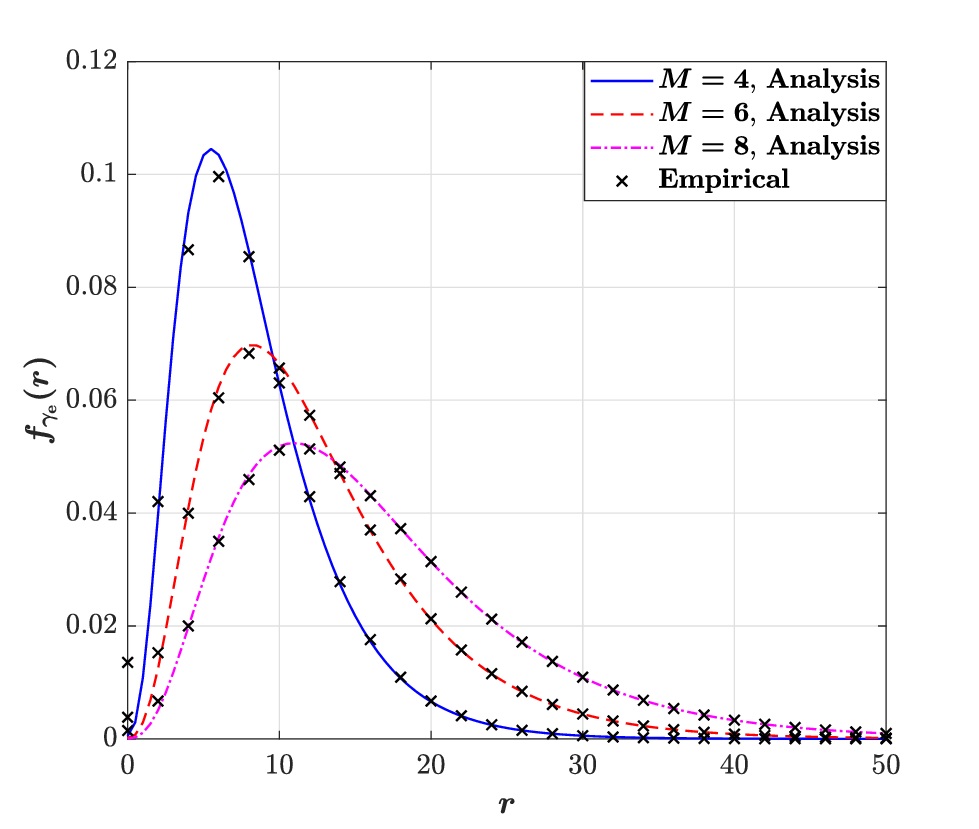}\label{fig-pdf_e}%
		}
		\vspace{-0.4cm}
		\caption{Analytical CDF and PDF of $\gamma_i$ for different $M$ and selected values of $N_i=4$ and $W_i=1\lambda^2$: (a) CDF of $\gamma_\mathrm{b}$, (b) PDF of $\gamma_\mathrm{b}$, (c) CDF of $\gamma_\mathrm{e}$, and (d) PDF of $\gamma_\mathrm{e}$. }\label{fig-dist}\vspace{-0.5cm}
	\end{figure}
\subsubsection{Distribution of $\gamma_\mathrm{e}$} Given  \eqref{eq-hfas} and \eqref{eq-snr}, the equivalent channel gain at $n_\mathrm{e}$-th port Eve can be extended as
\begin{align}
	\left|h_{\mathrm{eq},n_\mathrm{e}}\right|^2=\left|\sum_{m=1}^Mg_mh_{m,n_\mathrm{e}}^\mathrm{e}\mathrm{e}^{-j(\psi_m-\theta_m-\beta_m^\mathrm{e})}\right|^2=B^2,
\end{align}
where $\theta_m$ and $\beta_m^\mathrm{e}$ are uniformly distributed in $\left[0,2\pi\right)$. Given that the zero-mean RV $h_{m,n_\mathrm{e}}\mathrm{e}^{j\beta_m^\mathrm{e}}$ is independent of $g_m$, $\theta_m$, and $\psi_m$, we have $\mathbb{E}\left[g_mh_{m,n_\mathrm{e}}^\mathrm{e}\mathrm{e}^{-j(\psi_m-\theta_m-\beta_m^\mathrm{e})}\right]=0$ and $\mathrm{Var}\left[g_mh_{m,n_\mathrm{e}}^\mathrm{e}\mathrm{e}^{-j(\psi_m-\theta_m-\beta_m^\mathrm{e})}\right]=1$. Therefore, by utilizing the CLT for large values of $M$, $B$ is approximated a Gaussian RV with mean $\mu_B=0$ and variance $\sigma_B^2=M$. Hence, $B^2$ is an exponentially distributed RVs with the mean of $M$ and the following CDF and PDF, respectively \cite{shi2022secure}
	\begin{align}
		F_{B^2}\left(b\right)=1-\exp\left(-\frac{b}{M}\right), \label{eq-cdf-b}
	\end{align}
	\begin{align}
		f_{B^2}\left(b\right)=\frac{1}{M}\exp\left(-\frac{b}{M}\right).\label{eq-pdf-b}
	\end{align}
Now, since we have the distribution of the equivalent channel at Eve, the distribution of $\gamma_\mathrm{e}$ is determined as the following proposition. 
\begin{Proposition}The CDF and PDF of $\gamma_\mathrm{e}$ are given by \eqref{eq-cdf-ge} and \eqref{eq-pdf-ge}, respectively, where $\boldsymbol{\mathbf{\varphi}}^{-1}_{B^2}$ is defined in \eqref{eq-phi2} (see the top of the next page).
	\begin{figure*}[t]
		\normalsize
		\begin{align}
			&F_{\gamma_\mathrm{e}}\left(\gamma_\mathrm{e}\right)=\Phi_{\mathbf{R}_\mathrm{e}}\left(\sqrt{2}\mathrm{erf}^{-1}\left(1-2\exp\left(\frac{\gamma_\mathrm{e}}{M\overline{\gamma}_\mathrm{e}}\right)\right),\dots,\sqrt{2}\mathrm{erf}^{-1}\left(1-2\exp\left(\frac{\gamma_\mathrm{e}}{M\overline{\gamma}_\mathrm{e}}\right)\right);\vartheta_\mathrm{e}\right)\label{eq-cdf-ge}
		\end{align}
		\hrulefill
		\begin{align}\label{eq-pdf-ge}
			f_{\gamma_\mathrm{e}}\left(\gamma_\mathrm{e}\right)=\frac{1}{\overline{\gamma}_\mathrm{e}}\left[\frac{1}{M}\exp\left(-\frac{\gamma_\mathrm{e}}{M\overline{\gamma}_\mathrm{e}}\right)\right]^{N_\mathrm{e}}
			\frac{\exp\left(-\frac{1}{2}\left(\boldsymbol{\mathbf{\varphi}}^{-1}_{B^2}\right)^T\left(\mathbf{R}_\mathrm{e}^{-1}-\mathbf{I}\right)\boldsymbol{\mathbf{\varphi}}^{-1}_{B^2}\right)}{\sqrt{{\rm det}\left(\mathbf{R}_\mathrm{e}\right)}}
		\end{align}
		\hrulefill
		\begin{align}\label{eq-phi2}
			\boldsymbol{\mathbf{\varphi}}^{-1}_{B^2}=\left[\sqrt{2}\mathrm{erf}^{-1}\left(1-2\exp\left(\frac{\gamma_\mathrm{e}}{M\overline{\gamma}_\mathrm{e}}\right)\right),\dots,\sqrt{2}\mathrm{erf}^{-1}\left(1-2\exp\left(\frac{\gamma_\mathrm{e}}{M\overline{\gamma}_\mathrm{e}}\right)\right);\vartheta_\mathrm{e}\right]^T
		\end{align}
		\hrulefill
	\end{figure*}
	\end{Proposition}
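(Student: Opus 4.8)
The plan is to reproduce the argument of the preceding proposition almost verbatim, the only change being that Eve's per-port channel gain follows the exponential law of \eqref{eq-cdf-b}--\eqref{eq-pdf-b} rather than the non-central chi-square law established for Bob. The backbone is the same Sklar-theorem-plus-Gaussian-copula construction used in \eqref{eq-cdf1}--\eqref{eq-cdf-g}; I would simply swap in the new marginal at the final substitution step.

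First, using \eqref{eq-snr} and \eqref{eq-hfas}, I would write
\begin{align}
F_{\gamma_\mathrm{e}}\left(\gamma_\mathrm{e}\right)=\Pr\left(\overline{\gamma}_\mathrm{e}h_{\mathrm{fas,e}}^2\leq\gamma_\mathrm{e}\right)=\Pr\left(\max\left\{B^2,\dots,B^2\right\}\leq\frac{\gamma_\mathrm{e}}{\overline{\gamma}_\mathrm{e}}\right),
\end{align}
which equals the joint CDF $F_{B^2,\dots,B^2}\left(\gamma_\mathrm{e}/\overline{\gamma}_\mathrm{e},\dots,\gamma_\mathrm{e}/\overline{\gamma}_\mathrm{e}\right)$ over the $N_\mathrm{e}$ correlated ports. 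Invoking Sklar's theorem as in step $(b)$ of the previous proof and adopting the Gaussian copula with correlation matrix $\mathbf{R}_\mathrm{e}$ and dependence parameter $\vartheta_\mathrm{e}\approx\varrho^\mathrm{e}_{n_i,\tilde{n}_i}$, this becomes $\Phi_{\mathbf{R}_\mathrm{e}}$ evaluated at $N_\mathrm{e}$ copies of $\varphi^{-1}\left(F_{B^2}\left(\gamma_\mathrm{e}/\overline{\gamma}_\mathrm{e}\right)\right)$, with $\varphi^{-1}(u)=\sqrt{2}\mathrm{erf}^{-1}(2u-1)$. Inserting $F_{B^2}\left(\gamma_\mathrm{e}/\overline{\gamma}_\mathrm{e}\right)=1-\exp\left(-\gamma_\mathrm{e}/(M\overline{\gamma}_\mathrm{e})\right)$ from \eqref{eq-cdf-b} collapses the quantile argument to the form appearing in \eqref{eq-cdf-ge}, completing the CDF.

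For the PDF, I would differentiate the copula representation via the chain rule exactly as in \eqref{eq-gen-pdf}--\eqref{eq-gb-pdf}, which factorizes $f_{\gamma_\mathrm{e}}$ into $[f_{B^2}(\gamma_\mathrm{e}/\overline{\gamma}_\mathrm{e})]^{N_\mathrm{e}}/\overline{\gamma}_\mathrm{e}$ times the Gaussian copula density expressed through $\boldsymbol{\mathbf{\varphi}}^{-1}_{B^2}$, and then substitute $f_{B^2}$ from \eqref{eq-pdf-b} to arrive at \eqref{eq-pdf-ge}. Since the structural machinery is inherited wholesale from the previous proof, the only genuinely load-bearing step is the copula approximation itself, namely the assertion that the Gaussian copula with $\vartheta_\mathrm{e}\approx\varrho^\mathrm{e}_{n_i,\tilde{n}_i}$ faithfully captures the spatial correlation of the FAS ports, which is precisely what was already justified for Bob. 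I expect no new obstacle here: the exponential marginal is in fact simpler than the non-central chi-square one, so the remaining substitutions are routine and the main care lies merely in matching the $\mathrm{erf}^{-1}$ arguments correctly.
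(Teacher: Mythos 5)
Your proposal is correct and takes essentially the same route as the paper, whose proof of this proposition consists of a single sentence stating that the argument of Proposition 1 carries over once the exponential marginals \eqref{eq-cdf-b}--\eqref{eq-pdf-b} and the correlation matrix $\mathbf{R}_\mathrm{e}$ are substituted into \eqref{eq-cdf-g} and \eqref{eq-gen-pdf}. One remark: carrying out your substitution literally gives the quantile argument $\sqrt{2}\,\mathrm{erf}^{-1}\left(1-2\exp\left(-\frac{\gamma_\mathrm{e}}{M\overline{\gamma}_\mathrm{e}}\right)\right)$, so the missing minus sign in the exponent of \eqref{eq-cdf-ge} and \eqref{eq-phi2} is a typo in the paper rather than an error in your derivation.
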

\begin{proof}
The details of proof are similar to the proof of Proposition 1, where the CDF and PDF of $B^2$ and the correlation matrix $\mathbf{R}_\mathrm{e}$ need to be applied into \eqref{eq-cdf-g} and \eqref{eq-gen-pdf}. 
\end{proof}
Fig. \ref{fig-dist} compares the derived analytical CDF and PDF of the SNR $\gamma_i$, provided in \eqref{eq-cdf-gb}, \eqref{eq-pdf-gb}, \eqref{eq-cdf-ge}, and \eqref{eq-pdf-ge}, with Monte Carlo simulations. It can be observed that the analytical and simulation results are in perfect agreement, confirming the accuracy of the derived expressions using the CLT and Gaussian copula function. Furthermore, we can see that increasing $M$ significantly improves the fading conditions.
\subsection{SOP Analysis}
The SOP is defined as the probability that the instantaneous secrecy capacity $\mathcal{C}_\mathrm{s}$ is less than a target secrecy rate $R_\mathrm{s}$, i.e.,
\begin{align}
	P_\mathrm{sop}=\Pr\left(\mathcal{C}_\mathrm{s}\left(\gamma_\mathrm{b},\gamma_\mathrm{e}\right)\le R_\mathrm{s}\right),
\end{align}
in which, from information-theoretic viewpoint, the secrecy capacity is defined as the difference between the channel capacities corresponding to legitimate and wiretap links, i.e.,
\begin{align}
	\mathcal{C}_\mathrm{s}\left(\gamma_\mathrm{b},\gamma_\mathrm{e}\right)=\max\left\{\log_2\left(1+\gamma_\mathrm{b}\right)-\log_2\left(1+\gamma_\mathrm{e}\right),0\right\},\label{eq-cs2}
\end{align}
where $\mathcal{C}_i=\log_2\left(1+\gamma_i\right)$ denotes the capacity for the main (Alice to Bob) and eavesdropper (Alice to Eve) channels.
\begin{Proposition}
The SOP of the considered RIS-aided FAS is given by \eqref{eq-sop}, where $R_\mathrm{o}=2^{R_\mathrm{s}}$ and $R_\mathrm{t}=R_\mathrm{o}-1$.
\begin{figure*}
	\normalsize
	\begin{multline}
	P_\mathrm{sop}\approx\sum_{k=1}^K \frac{\omega_k\mathrm{e}^{\epsilon_k}}{\overline{\gamma}_\mathrm{e}}\left[\frac{1}{M}\exp\left(-\frac{\epsilon_k}{M\overline{\gamma}_\mathrm{e}}\right)\right]^{N_\mathrm{e}}
	\frac{\exp\left(-\frac{1}{2}\left(\boldsymbol{\mathbf{\varphi}}^{-1}_{B^2}\right)^T\left(\mathbf{R}_\mathrm{e}^{-1}-\mathbf{I}\right)\boldsymbol{\mathbf{\varphi}}^{-1}_{B^2}\right)}{\sqrt{{\rm det}\left(\mathbf{R}_\mathrm{e}\right)}}\\
	\times\Phi_{\mathbf{R}_\mathrm{b}}\left(\sqrt{2}\mathrm{erf}^{-1}\left(1-2Q_{\frac{1}{2}}\left(\sqrt{\frac{\tau}{\sigma_A^2}},\sqrt{\frac{R_\mathrm{o}\epsilon_k+R_\mathrm{t}}{\sigma_A^2\overline{\gamma}_\mathrm{b}}}\right)\right),\dots,\sqrt{2}\mathrm{erf}^{-1}\left(1-2Q_{\frac{1}{2}}\left(\sqrt{\frac{\tau}{\sigma_A^2}},\sqrt{\frac{R_\mathrm{o}\epsilon_k+R_\mathrm{t}}{\sigma_A^2\overline{\gamma}_\mathrm{b}}}\right)\right);\vartheta_\mathrm{b}\right)\label{eq-sop}
	\end{multline}\vspace{-0.6cm}
	\hrulefill
\end{figure*}
\end{Proposition}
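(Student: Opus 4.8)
The plan is to collapse the SOP into a single integral over the eavesdropper's SNR and then discretise it with the Gaussian--Laguerre rule. First I would unpack the secrecy-capacity event. Because $R_\mathrm{s}\ge 0$, the event $\{\mathcal{C}_\mathrm{s}\le R_\mathrm{s}\}$ coincides with $\{\log_2(1+\gamma_\mathrm{b})-\log_2(1+\gamma_\mathrm{e})\le R_\mathrm{s}\}$ irrespective of the $\max\{\cdot,0\}$ operator: when $\gamma_\mathrm{b}\le\gamma_\mathrm{e}$ the capacity is zero and the inequality holds automatically, and that regime is already contained in the half-plane produced by clearing the logarithms. With $R_\mathrm{o}=2^{R_\mathrm{s}}$ and $R_\mathrm{t}=R_\mathrm{o}-1$, the condition $\tfrac{1+\gamma_\mathrm{b}}{1+\gamma_\mathrm{e}}\le R_\mathrm{o}$ rearranges to the equivalent event $\gamma_\mathrm{b}\le R_\mathrm{o}\gamma_\mathrm{e}+R_\mathrm{t}$, so no case-splitting of the integration domain is needed.

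Next I would exploit the statistical independence of the main and wiretap links---the spatial correlation of the ports is confined to each node separately and is already absorbed into $\mathbf{R}_\mathrm{b}$ and $\mathbf{R}_\mathrm{e}$. Conditioning on $\gamma_\mathrm{e}$ and averaging, the SOP reduces to
\begin{align*}
P_\mathrm{sop}=\int_0^\infty F_{\gamma_\mathrm{b}}\left(R_\mathrm{o}\gamma_\mathrm{e}+R_\mathrm{t}\right)f_{\gamma_\mathrm{e}}\left(\gamma_\mathrm{e}\right)\,d\gamma_\mathrm{e},
\end{align*}
where $F_{\gamma_\mathrm{b}}$ and $f_{\gamma_\mathrm{e}}$ are imported from Propositions~1 and~2.

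This integral admits no closed form, since $F_{\gamma_\mathrm{b}}$ stacks a multivariate normal CDF on top of an $\mathrm{erf}^{-1}$ of a Marcum $Q$-function while $f_{\gamma_\mathrm{e}}$ carries the Gaussian-copula density. The key manipulation is therefore to cast the integral into the form the GLQ rule consumes: inserting $e^{-\gamma_\mathrm{e}}e^{\gamma_\mathrm{e}}$ rewrites it as $\int_0^\infty e^{-\gamma_\mathrm{e}}\,g(\gamma_\mathrm{e})\,d\gamma_\mathrm{e}$ with $g(\gamma_\mathrm{e})=e^{\gamma_\mathrm{e}}F_{\gamma_\mathrm{b}}(R_\mathrm{o}\gamma_\mathrm{e}+R_\mathrm{t})f_{\gamma_\mathrm{e}}(\gamma_\mathrm{e})$, and the $K$-point approximation $\sum_{k=1}^K\omega_k\,g(\epsilon_k)$---with $\epsilon_k$ the Laguerre abscissae and $\omega_k$ the weights---produces the stated series. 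Substituting $f_{\gamma_\mathrm{e}}(\epsilon_k)$ from \eqref{eq-pdf-ge} and $F_{\gamma_\mathrm{b}}(R_\mathrm{o}\epsilon_k+R_\mathrm{t})$ from \eqref{eq-cdf-gb}---whose Marcum-$Q$ argument simply becomes $\sqrt{(R_\mathrm{o}\epsilon_k+R_\mathrm{t})/(\sigma_A^2\overline{\gamma}_\mathrm{b})}$---then reassembles \eqref{eq-sop}.

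I expect the principal difficulty to be conceptual rather than computational. The two points that require care are confirming that the $\max\{\cdot,0\}$ truncation does not fragment the integration region (handled by the equivalence above) and verifying that the $e^{-\gamma_\mathrm{e}}$ weighting is legitimate, i.e.\ that $g(\gamma_\mathrm{e})$ decays fast enough for the quadrature to converge as $K$ grows, which follows from the exponential tail of $f_{\gamma_\mathrm{e}}$ dominating the inserted $e^{\gamma_\mathrm{e}}$ factor together with the boundedness $F_{\gamma_\mathrm{b}}\le 1$. Everything downstream of the reduction is bookkeeping: plugging the Proposition~1 and~2 expressions into the quadrature sum.
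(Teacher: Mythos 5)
Your proposal is correct and follows essentially the same route as the paper: reduce the SOP to $\int_0^\infty F_{\gamma_\mathrm{b}}(R_\mathrm{o}\gamma_\mathrm{e}+R_\mathrm{t})f_{\gamma_\mathrm{e}}(\gamma_\mathrm{e})\,\mathrm{d}\gamma_\mathrm{e}$ and then apply the Gauss--Laguerre rule, which is exactly the paper's argument. Your handling of the $\max\{\cdot,0\}$ truncation and your explicit insertion of $\mathrm{e}^{-\gamma_\mathrm{e}}\mathrm{e}^{\gamma_\mathrm{e}}$ (which explains the $\omega_k\mathrm{e}^{\epsilon_k}$ factor in \eqref{eq-sop}) are actually spelled out more carefully than in the paper's own proof.
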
 
\begin{proof} Given \eqref{eq-cs2} and the  definition of SOP, we have
	\begin{align}
		&P_\mathrm{sop} = \Pr\left(\log_2\left(\frac{1+\gamma_\mathrm{b}}{1+\gamma_\mathrm{e}}\leq R_\mathrm{s}\right)\right)\\
		&=\Pr\left(\gamma_\mathrm{b}\leq2^{R_\mathrm{s}}\gamma_\mathrm{e}+2^{R_\mathrm{s}}-1\right)\\
		&=\int_0^\infty F_{\gamma_\mathrm{b}}\left(R_\mathrm{o}\gamma_\mathrm{e}+R_\mathrm{t}\right)f_{\gamma_\mathrm{e}}\left(\gamma_\mathrm{e}\right)\mathrm{d}\gamma_\mathrm{e},\label{eq-sop1}
	\end{align}
where $R_\mathrm{o}=2^{R_\mathrm{s}}$ and $R_\mathrm{t}=R_\mathrm{o}-1$. By inserting the corresponding CDF and PDF from  \eqref{eq-cdf-gb} and \eqref{eq-pdf-ge} into \eqref{eq-sop1}, it can be seen that computing the integral in \eqref{eq-sop1} is almost mathematically intractable due to complexity of the integrand. In this regard, the integral can be solved numerically by using programming language such as MATLAB or numerical integral techniques such GLQ. 
\begin{lemma}\label{eq-lemma1}
	The GLQ is defined as
	\begin{align}
		\int_0^\infty\mathrm{e}^{-x}\Lambda\left(x\right)\mathrm{d}x\approx\sum_{k=1}^{K}\omega_k\Lambda\left(\epsilon_k\right)
	\end{align}
	where $\epsilon_k$ is the $k$-th root of Laguerre polynomial $L_{K}\left(\epsilon_k\right)$ with the weight $\omega_k$.
\end{lemma}
Now, by applying Lemma \ref{eq-lemma1} into \eqref{eq-sop1}, the proof is accomplished.
\end{proof}
\section{Numerical Results}
Here, we evaluate the proposed system's performance through our analytical derivations, which are validated by Monte Carlo simulations. The simulation parameters are set as follows:  $R_\mathrm{s}=3$ bits, $d_\mathrm{r}=100$ m, $d_\mathrm{b}=500$ m, $d_\mathrm{e}=800$ m, $\alpha=2.1$, $\sigma^2_\mathrm{b}=-70$ dBm, $\sigma^2_\mathrm{e}=-50$ dBm, $K=2$, $\epsilon_k=\left\{2-\sqrt{2}, 2+\sqrt{2}\right\}$, and $\omega_k=\left\{0.8536, 0.1464\right\}$. Furthermore, we utilize Algorithm 1 in \cite{ghadi2023gaussian}, to implement the Gaussian copula function, which includes the joint CDF and PDF of the normal distribution. 

Fig. \ref{fig-sop} illustrates the performance of SOP in terms of the average SNR at Bob $\overline{\gamma}_\mathrm{b}$ for different values of fluid antenna size $W_\mathrm{b}$ and the number of fluid antenna ports $N_\mathrm{b}$. As expected, it can be observed that increasing $\gamma_\mathrm{b}$ improves the SOP performance since the main channel condition becomes better. Moreover, we can see that by simultaneously increasing $W_\mathrm{b}$ and $N_\mathrm{b}$, the SOP value remarkably decreases since the spatial correlation between the FAS ports becomes balanced. Therefore, the diversity gain and spatial multiplexing improve which lead to better signal quality and more reliable communications. For instance, for a given $\overline{\gamma}_\mathrm{b}=2$ dB, the SOP is in the order of $10^{-5}$ with $N_\mathrm{b}=4$ and $W_\mathrm{b}=1\lambda^2$, while  the SOP achieves the order of $10^{-8}$ for $N_\mathrm{b}=8$ and $W_\mathrm{b}=3\lambda^2$. Additionally, by using the single fixed-antenna system as a benchmark, we can see that the FAS with only activated port performs better in terms of SOP compared to the TAS.

Fig. \ref{fig-sop_m} represents the SOP behavior versus $\overline{\gamma}_\mathrm{b}$ for different numbers of RIS elements $M$. As $M$ increases, we observe a significant improvement in SOP performance. This improvement is due to the enhanced signal control and directionality, increased path diversity and array gain, and boosted secrecy capacity that comes with more RIS elements. Consequently, more robust countermeasures against eavesdropping are implemented, resulting in a more secure transmission. Moreover, we observe that increasing $M$ has a greater impact on reducing SOP in the FAS compared to TAS.  

To gain more insights into the impact of fluid antenna size and the number of fluid antenna ports on the SOP performance, we present Fig. \ref{fig-sop2}. Participially, Fig. \ref{fig-sop_w} indicates the SOP performance in terms of $W_\mathrm{b}$ for a fixed $W_\mathrm{b}$ and different values of $\overline{\gamma}_\mathrm{b}$. It can be observed that the SOP decreases without bound as $W_\mathrm{b}$ grows for sufficient large value of $N_\mathrm{b}$. This improvement is expected because increasing the size of the fluid antenna, while keeping the number of ports both fixed and large, significantly reduces spatial correlation, which leads to better signal reception, reduced interference, and ultimately, a lower SOP. Furthermore, we can see this improvement becomes more noticeable for larger $\overline{\gamma}_\mathrm{b}$. Fig. \ref{fig-sop_n} illustrates the performance of SOP versus $N_\mathrm{b}$ for a given $W_\mathrm{b}$ and different $\overline{\gamma}_\mathrm{b}$. We can observe that the SOP performance initially enhances and then becomes saturated as $N_\mathrm{b}$ continuously increases. This occurs primarily because increasing $N_\mathrm{b}$ while keeping $W_\mathrm{b}$ fixed causes the ports to become increasingly clustered. As a result, spatial correlation effects become more pronounced, diminishing the diversity gain after a certain threshold is reached. Consequently, the reduction in SOP slows and eventually reaches a saturation point.   
	\begin{figure}
	\centering
	\vspace{-0cm}
	\subfigure[]{%
		\includegraphics[width=0.25\textwidth]{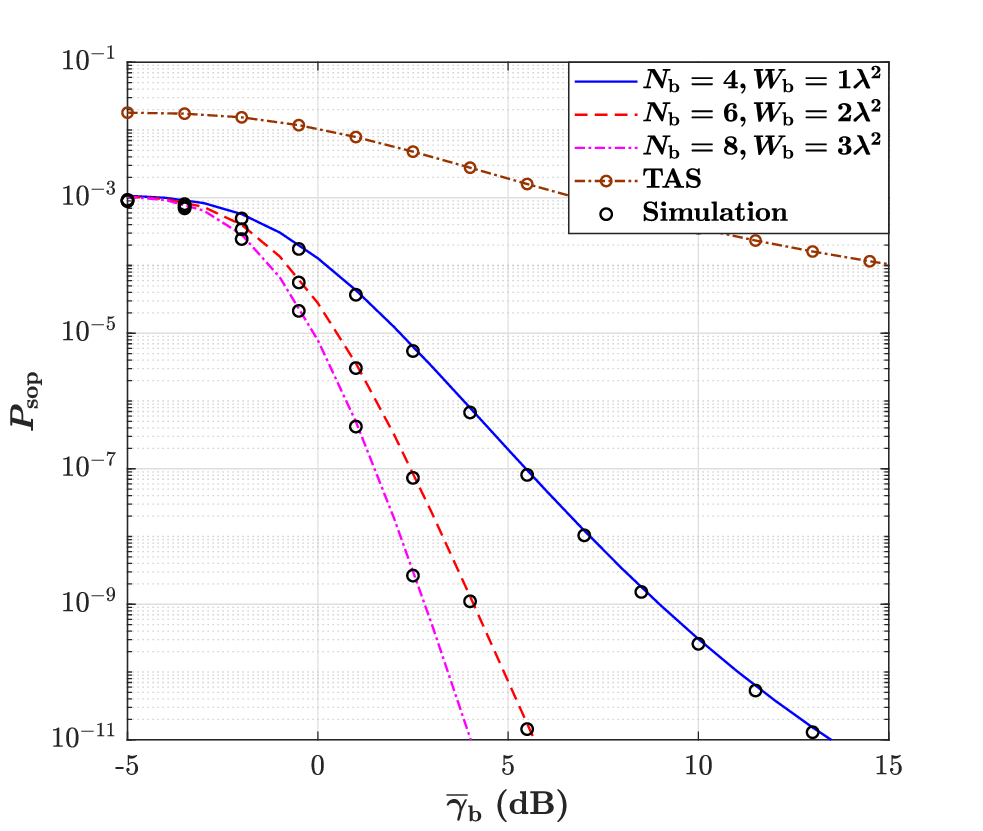}\label{fig-sop}%
		}\vspace{-0.35cm}
	\subfigure[]{%
		\includegraphics[width=0.25\textwidth]{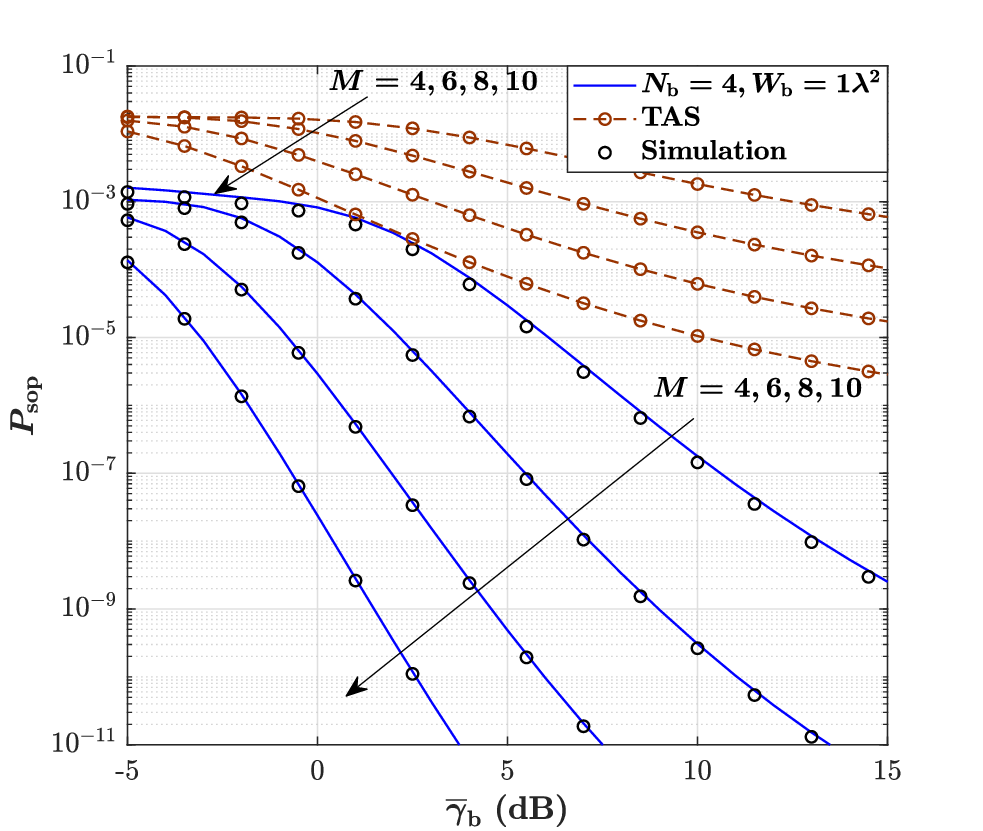}\label{fig-sop_m}%
	}
	\vspace{-0.1cm}
	\caption{(a) SOP versus $\overline{\gamma}_\mathrm{b}$ for different values of $N_\mathrm{b}$ when $M=6$, $N_\mathrm{e}=4$, and $W_\mathrm{e}=1\lambda^2$. (b) SOP versus $\overline{\gamma}_\mathrm{b}$ for different values of $M$ when $N_i=4$ and $W_i=1\lambda^2$.}\label{fig-sop1}\vspace{-0.5cm}
\end{figure}
	\begin{figure}
	\centering
	\vspace{-0.3cm}
	\subfigure[]{%
		\includegraphics[width=0.25\textwidth]{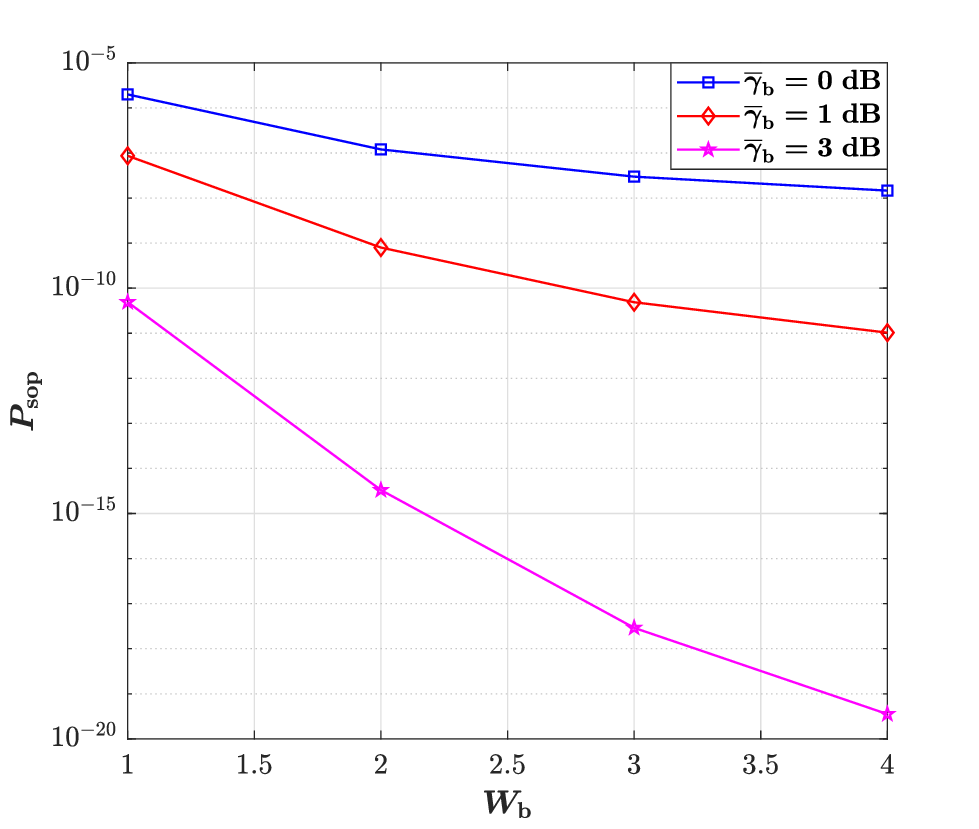}\label{fig-sop_w}%
	}
	\subfigure[]{%
		\includegraphics[width=0.25\textwidth]{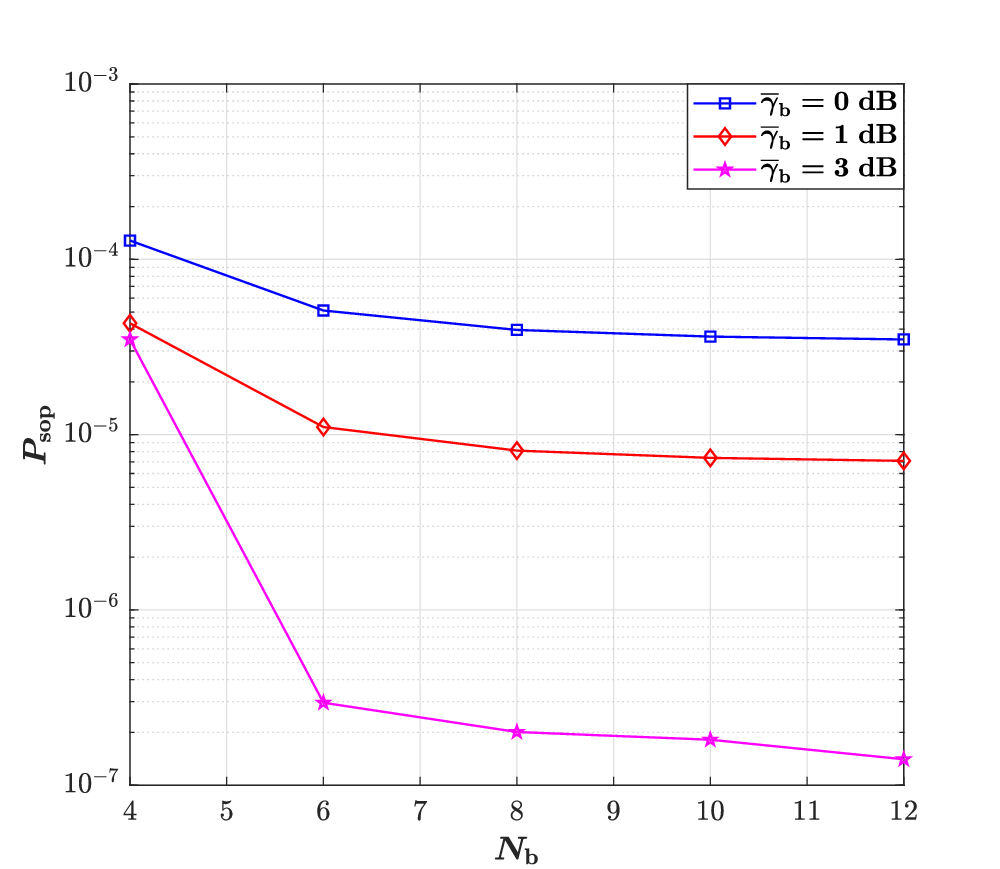}\label{fig-sop_n}%
	}
	\vspace{-0.5cm}
	\caption{(a) SOP versus $W_\mathrm{b}$ for a fixed $N_\mathrm{b}=16$ and (b) SOP versus $N_\mathrm{b}$ for a fixed $W_\mathrm{b}=1\lambda^2$ for selected values of $\bar{\gamma}_\mathrm{b}$, when $M=6$, $N_\mathrm{e}=4$, and $W_\mathrm{e}=1\lambda^2$.}\label{fig-sop2}\vspace{-0.5cm}
\end{figure}\vspace{-0.6cm}
\section{Conclusion}\vspace{-0.4cm}
In this paper, we investigated the secrecy performance of FAS in a RIS-aided secure communication system. Specifically, by considering a classic wiretap channel, we assumed that Alice sends a confidential message to Bob through a RIS, while Eve attempts to decode the information from her received signal. Additionally, we assumed that Alice has a fixed antenna, whereas both Bob and Eve are equipped with FAS. In this context, we first derived the CDF and PDF of the received SNR at Bob and Eve using the CLT and the Gaussian copula function. Next, we obtained the compact analytical expression of the SOP by employing the numerical integration GLQ method. Our numerical results demonstrated the advantages of using FAS in RIS-aided communication, providing more secure and reliable transmission.\vspace{-0.2cm}
\section*{Acknowledgment}
\small This work is supported by the Engineering and Physical Sciences Research Council (EPSRC) under Grant EP/W026813/1, the European Union's Horizon 2022 Research and Innovation Programme under Marie Sk\l{}odowska-Curie Grant No. 101107993, the grant PID2023-149975OB-I00 (COSTUME) funded by MICIU/AEI/10.13039/501100011033, and the EMERGIA20-00297 grant funded by Junta de Andaluc\'ia.
\vspace{-0.1cm}


\begin{thebibliography}{10}
	\providecommand{\url}[1]{#1}
	\csname url@samestyle\endcsname
	\providecommand{\newblock}{\relax}
	\providecommand{\bibinfo}[2]{#2}
	\providecommand{\BIBentrySTDinterwordspacing}{\spaceskip=0pt\relax}
	\providecommand{\BIBentryALTinterwordstretchfactor}{4}
	\providecommand{\BIBentryALTinterwordspacing}{\spaceskip=\fontdimen2\font plus
		\BIBentryALTinterwordstretchfactor\fontdimen3\font minus
		\fontdimen4\font\relax}
	\providecommand{\BIBforeignlanguage}[2]{{%
			\expandafter\ifx\csname l@#1\endcsname\relax
			\typeout{** WARNING: IEEEtran.bst: No hyphenation pattern has been}%
			\typeout{** loaded for the language `#1'. Using the pattern for}%
			\typeout{** the default language instead.}%
			\else
			\language=\csname l@#1\endcsname
			\fi
			#2}}
	\providecommand{\BIBdecl}{\relax}
	\BIBdecl
	
	\bibitem{new2024tutorial}
	W.~K. New \emph{et~al.}, ``{A Tutorial on
		Fluid Antenna System for 6G Networks: Encompassing Communication Theory,
		Optimization Methods and Hardware Designs},'' \emph{arXiv preprint
		arXiv:2407.03449}, 2024.
	
	\bibitem{huang2021liquid}
	Y.~Huang \emph{et~al.}, ``{Liquid antennas: Past,
		present and future},'' \emph{IEEE Open J. Antennas Propag.},
	vol.~2, pp. 473--487, 2021.
	
	\bibitem{rodrigo2014frequency}
	D.~Rodrigo and B.~A. Cetiner, ``{Frequency, radiation pattern and
		polarization reconfigurable antenna using a parasitic pixel layer},''
	\emph{IEEE Trans. Antennas Propag.}, vol.~62, no.~6, pp.
	3422--3427, 2014.
	
	\bibitem{basbug2017design}
	S.~Basbug, ``{Design and synthesis of antenna array with movable elements along semicircular paths},'' \emph{IEEE Antennas Wirel. Propag. Lett.},
	vol.~16, pp. 3059--3062, 2017.
	
	\bibitem{ghadi2023copula}
	F.~R. Ghadi \emph{et~al.}, 
	``{Copula-based performance analysis for fluid antenna systems under
		arbitrary fading channels},'' \emph{IEEE Commun. Lett.}, 2023.
	
	\bibitem{wong2020fluid}
	K.-K. Wong \emph{et~al.}, ``{Fluid antenna systems},'' \emph{IEEE Trans. Wireless Commun.}, vol.~20,
	no.~3, pp. 1950--1962, 2020.
	
	
	\bibitem{shojaeifard2022mimo}
	A.~Shojaeifard \emph{et~al.}, ``{MIMO evolution beyond
		5G through reconfigurable intelligent surfaces and fluid antenna systems},''
	\emph{Proc. IEEE.}, vol. 110, no.~9, pp. 1244--1265, 2022.
	
	
		\bibitem{new2023information}
	W.~K. New \emph{et~al.}, ``{An information-theoretic characterization of MIMO-FAS: Optimization, diversity-multiplexing tradeoff and q-outage capacity},'' \emph{IEEE Trans. Wireless Commun.}, vol.~23, no.~6, pp.
	5541--5556, 2024.
	
	\bibitem{ghadi2024cache}
	F.~R. Ghadi \emph{et~al.},  ``{Cache-Enabled Fluid
		Antenna Systems: Modeling and Performance},'' \emph{IEEE Commun. Lett.}, vol.~28, no.~8, pp. 1934--1938, Aug. 2024.
	
	\bibitem{new2023fluid}
	W.~K. New \emph{et~al.}, ``{Fluid antenna
		system: New insights on outage probability and diversity gain},'' \emph{IEEE
		Trans. Wireless Commun.}, vol.~23, no.~1, pp. 128--140, Jan. 2024.
	
	\bibitem{xu2023channel}
	H.~Xu \emph{et~al.}, ``{Channel estimation for {FAS-assisted} multiuser {mmWave}
		systems},'' \emph{IEEE Commun. Lett.}, vol.~28, no.~3, pp. 632--636, Mar.
	2024.
		\bibitem{ghadi2024isac}
	F.~R. Ghadi \emph{et~al.} \emph{et~al.}, ``{Performance Analysis of FAS-Aided NOMA-ISAC: A Backscattering Scenario},'' \emph{arXiv preprint arXiv:2408.04724}, 2024.
	
		\bibitem{xu2023capacity}
	H.~Xu \emph{et~al.}, ``{Capacity maximization for {FAS}-assisted multiple access channels},'' \emph{arXiv preprint arXiv:2311.11037}, 2023.
	
	
		\bibitem{zhang2024pixel}
	J.~Zhang \emph{et~al.}, ``{A Pixel-based Reconfigurable Antenna Design for Fluid Antenna Systems},'' \emph{arXiv preprint arXiv:2406.05499}, 2024.
	
			\bibitem{shen2024design}
	J.~Zhang \emph{et~al.}, ``{Design and Implementation of mmWave Surface Wave Enabled Fluid Antennas and Experimental Results for Fluid Antenna Multiple Access},'' \emph{arXiv preprint arXiv:2405.09663}, 2024.
	
	\bibitem{basar2019wireless}
	E.~Basar \emph{et~al.},
	``{Wireless communications through reconfigurable intelligent surfaces},''
	\emph{IEEE access}, vol.~7, pp. 116\,753--116\,773, 2019.
	
	\bibitem{liu2021reconfigurable}
	Y.~Liu \emph{et~al.},
	``{Reconfigurable intelligent surfaces: Principles and opportunities},''
	\emph{IEEE Commun. Surv. Tutor.}, vol.~23, no.~3, pp.
	1546--1577, 2021.
	
	\bibitem{pan2021reconfigurable}
	C.~Pan \emph{et~al.}, ``{Reconfigurable
		intelligent surfaces for 6G systems: Principles, applications, and research
		directions},'' \emph{IEEE Commun. Mag.}, vol.~59, no.~6, pp.
	14--20, 2021.
	
	\bibitem{xiao2023array}
	Z.~Xiao \emph{et~al.}, \emph{{Array beamforming enabled wireless communications}}.\hskip 1em plus 0.5em minus 0.4em\relax CRC Press,
	2023.
	
	\bibitem{ghadi2024performance}
	F.~R. Ghadi \emph{et~al.}, ``{On
		performance of RIS-aided fluid antenna systems},'' \emph{ IEEE Wireless Commun. Lett.}, vol.~13, no.~8, pp.
		2175--2179, Aug. 2024.
	
	\bibitem{ghadi2024physical}
	F.~R. Ghadi \emph{et~al.}, ``{Physical Layer Security over Fluid Antenna Systems},'' \emph{arXiv
		preprint arXiv:2402.05722}, 2024.
	
	\bibitem{tang2023fluid}
	B.~Tang \emph{et~al.}, ``{Fluid
		antenna enabling secret communications},'' \emph{IEEE Commun. Lett.}, vol.~27, no.~6, pp. 1491--1495, 2023.
	
	\bibitem{ghadi2020copula}
	F.~R. Ghadi and G.~A. Hodtani, ``{Copula-based analysis of physical layer
		security performances over correlated Rayleigh fading channels},'' \emph{IEEE Trans. Inf. Forensics Security.}, vol.~16, pp. 431--440,
	2020.
	
	\bibitem{ghadi2023gaussian}
	F.~R. Ghadi \emph{et~al.}, ``{A Gaussian copula approach to the performance analysis of fluid
		antenna systems},'' \emph{arXiv preprint arXiv:2309.07506}, 2023.
	
	
\end{thebibliography}


\end{document}